\documentclass[
reprint,
aps,
pra,
superscriptaddress,
longbibliography,
nofootinbib,
floatfix
]{revtex4-2}

\usepackage[T1]{fontenc}
\usepackage[utf8]{inputenc}

\usepackage{
	amsmath,
	amssymb,
	amsfonts,
	amsthm,
	mathtools,
	bm
}

\usepackage{braket}

\usepackage{graphicx}
\usepackage{tikz}

\usetikzlibrary{
	arrows.meta,
	positioning,
	fit,
	trees,
	shapes,
	calc
}

\usepackage{booktabs}
\usepackage{tabularx}
\usepackage{dcolumn}

\usepackage{hyperref}

\theoremstyle{plain}
\newtheorem{theorem}{Theorem}

\theoremstyle{definition}
\newtheorem{definition}{Definition}

\newcommand{\OO}{\mathcal{O}}

\newcommand{\EE}{\mathcal{E}}

\begin{document}
	
	
	\title{
		Emergent Operational Entanglement Graphs and Sub-Quadratic Authentication Scaling in Realistic E91 Quantum Networks
	}
	
	\author{Jose Luis Rosales}
	\email{JoseLuis.Rosales@upm.es}
	
	\affiliation{
		Universidad Polit\'ecnica de Madrid (UPM)\\
		Center for Computational Simulation\\
		DLSIIS, ETSI Inform\'aticos\\
		Boadilla del Monte, Spain
	}
	
	\date{\today}
	
	
	\begin{abstract}
		
		Large-scale entanglement-based quantum key distribution (QKD) networks are commonly assumed to require authentication resources scaling quadratically with the number of users. We show that realistic quantum communication networks operating under loss, decoherence, and LOCC constraints exhibit fundamentally different scaling laws. Using Pauli transfer matrix (PTM) transport, we demonstrate that Bell correlations decay exponentially along entanglement-swapping paths, generating finite operational correlation lengths and sparse operational entanglement graphs. In sparse metropolitan quantum networks, the number of CHSH-usable Bell pairs consequently scales linearly with network size, while authentication complexity scales as
		\[
		\Theta(N\log N),
		\]
		under sparse-mixing assumptions. We further present an ancilla-assisted distributed Bell-state verification framework for realistic E91 quantum metropolitan infrastructures. Our results suggest that scalable authentication in quantum communication networks emerges directly from the physics of entanglement transport.
		
	\end{abstract}
	\maketitle
	
	\section{Introduction}
	
	The development of scalable quantum communication infrastructures is one of the central objectives of modern quantum technologies. The long-term vision of a quantum internet involves distributed entanglement generation among geographically separated nodes connected through optical fibre and satellite infrastructures \cite{Kimble2008,Wehner2018,Azuma2023}. 
	
	Recent experimental deployments have demonstrated significant progress toward this objective. Metropolitan-scale infrastructures such as MadQCI and EuroQCI illustrate the feasibility of heterogeneous quantum-secured communication systems operating over realistic production environments\cite{martin2024}. In parallel, recent entanglement-swapping demonstrations over urban fibre networks further support the practical viability of distributed Bell-state transport across realistic telecommunication infrastructures.
	
	Entanglement-based QKD protocols, particularly the E91 protocol, provide device-independent security guarantees rooted in Bell inequality violations \cite{Acin2007DIQKD}. However, the scalability of authentication in large multi-user quantum networks remains poorly understood. Conventional analyses frequently assume that authentication resources scale quadratically with the number of users because every pair of communicating nodes potentially requires independent authentication resources \cite{diamanti2016}. 
	
	This perspective implicitly treats network topology as externally imposed and assumes that every pair of nodes remains equally accessible from an operational viewpoint. Realistic quantum networks, however, are fundamentally constrained by physical transport phenomena. Photon loss, imperfect Bell-state measurements, finite quantum-memory coherence times, probabilistic entanglement swapping, purification overheads, and measurement-conditioned LOCC protocols strongly constrain the spatial propagation of usable entanglement \cite{Watrous2018}.
	
	Consequently, the operational connectivity of a quantum network differs substantially from its underlying physical topology. While a metropolitan fibre infrastructure may exhibit global small-world connectivity, the subset of Bell correlations that remain operationally useful for E91 authentication can become dramatically sparse due to finite correlation transport lengths \cite{Watts1998}.
	
	The central thesis of the present work is that authentication complexity in realistic E91 quantum networks is not governed by combinatorial pair counting but instead emerges dynamically from the physics of entanglement transport under realistic operational constraints.
	
	Unlike conventional graph-theoretic approaches, the present work treats operational Bell connectivity as an emergent physical property determined by correlation transport under realistic network conditions. In practical quantum communication infrastructures, Bell correlations propagate through lossy entanglement-swapping chains subject to decoherence, finite quantum-memory coherence times, imperfect Bell-state measurements, and measurement-conditioned LOCC operations. These mechanisms dynamically constrain the spatial extent of operationally usable Bell correlations, generating finite operational correlation lengths analogous to screening phenomena in many-body systems.
	
	Within this framework, the effective operational entanglement graph differs fundamentally from the underlying physical fibre topology. Although metropolitan quantum networks may exhibit global small-world connectivity, the subset of Bell pairs that remains operationally usable for E91 authentication becomes intrinsically sparse. As a consequence, authentication complexity is governed primarily by the transport physics of entanglement rather than by naive combinatorial pair counting. This emergent perspective establishes a direct connection between quantum-network topology, Bell-correlation transport, entanglement percolation, and scalable authentication architectures in realistic quantum communication infrastructures.
	
	Our approach differs conceptually from most existing work on quantum networking. Previous studies typically optimize entanglement distribution over a predefined network graph. Here we instead derive the effective operational graph from the physical propagation of Bell correlations themselves \cite{Acin2007,Cuquet2009,Pant2019}.
	
	We show that finite operational correlation lengths induce sparse emergent operational entanglement graphs whose number of CHSH-usable Bell pairs scales linearly with network size despite logarithmic global network connectivity. This naturally produces sub-quadratic authentication scaling laws.
	
	More broadly, the results support an interpretation in which realistic quantum communication infrastructures behave as distributed many-body systems whose effective topology emerges dynamically from correlation transport under loss and decoherence \cite{Hastings2006,Verstraete2008}.

The present work develops a complementary physical framework in which authentication complexity emerges directly from the transport properties of entanglement itself. Rather than treating Bell connectivity as an externally imposed graph-theoretic resource, we derive the effective operational entanglement structure from the propagation of Bell correlations under realistic lossy quantum-network dynamics. In particular, we show that Pauli-transfer-matrix (PTM) transport naturally generates finite operational correlation lengths due to loss, decoherence, imperfect Bell measurements, and LOCC-conditioned entanglement swapping. This mechanism dynamically restricts the set of operationally usable Bell pairs to sparse local neighbourhoods, even when the underlying physical infrastructure possesses global small-world connectivity \cite{Watts1998}.

From this perspective, realistic E91 quantum communication networks exhibit an emergent operational geometry fundamentally different from their physical fibre topology. The resulting CHSH-usable Bell graph becomes intrinsically sparse, with each node maintaining only a finite number of operational Bell partners independent of total network size. This physical sparsification mechanism immediately modifies authentication scaling laws. Under physically motivated sparse-mixing assumptions, we show that the total authentication overhead scales as
\[
\Theta(N\log N),
\]
rather than quadratically. More generally, the present work establishes a direct connection between entanglement transport physics, operational network topology, and scalable authentication in realistic quantum communication infrastructures.

Beyond the scaling analysis itself, the work also introduces a physically motivated operational interpretation of quantum metropolitan networks as distributed entanglement-transport systems. In this interpretation, operational Bell connectivity emerges dynamically from correlation propagation under realistic constraints in close analogy with finite-correlation and screening phenomena in many-body physics. The framework therefore connects quantum-network theory, open quantum systems, entanglement transport, and operational QKD architectures within a unified description.

The remainder of this article is organised as follows. In Sec.~II we introduce the physically constrained quantum-network model and define operational Bell connectivity. Section~III develops the PTM description of multi-hop entanglement transport and derives the emergence of finite operational correlation lengths. Section~IV analyses the resulting sparse operational entanglement graph and proves the linear scaling of CHSH-usable Bell pairs. In Sec.~V we derive the corresponding authentication complexity under sparse-mixing assumptions. Section~VI presents the ancilla-assisted distributed Bell-state verification framework and discusses its operational role in realistic E91 infrastructures. Finally, Sec.~VII discusses the broader physical interpretation of emergent operational network geometry and Sec.~VIII summarizes the main conclusions and implications for scalable quantum communication architectures.
	
	\section{Physically Constrained Quantum Communication Networks}
	
	We model a metropolitan quantum communication infrastructure as a sparse graph
	\[
	G=(V,E),
	\]
	where the vertices $V$ represent quantum nodes and the edges $E$ correspond to elementary optical-fibre links capable of distributing entangled photonic states. The total number of network nodes is denoted by
	\[
	N=|V|.
	\]
	
	Our analysis focuses on realistic metropolitan-scale quantum networks operating in the sparse-connectivity regime, where the average number of physical neighbours per node remains bounded independently of network size,
	\[
	\langle k\rangle=\OO(1).
	\]
	At the same time, the network retains small-world characteristics, so that the average shortest-path distance between arbitrary nodes grows only logarithmically with the total number of nodes,
	\[
	\langle d\rangle=\OO(\log N).
	\]
	
	This class of topologies provides a realistic abstraction of optical metropolitan infrastructures and quantum communication backbones, where most nodes possess only local fibre connectivity while a smaller number of longer-range links or routing shortcuts significantly reduce global communication distances. Sparse small-world architectures are particularly relevant in practical quantum networks because they combine scalability, moderate physical deployment cost, and efficient long-range entanglement distribution.
	
	\subsection{Operational Bell Connectivity}
	
	The key distinction introduced in this work is the separation between:
	
	\begin{enumerate}
		\item the underlying physical network graph,
		\item the operational Bell-connectivity graph.
	\end{enumerate}
	
	\begin{definition}[Operational Bell link]
		A Bell link between nodes $A$ and $B$ is operational if the distributed state supports Bell correlations strong enough to enable E91-type authentication and QKD under realistic physical constraints.
	\end{definition}
	
	Operational connectivity therefore depends not only on graph topology but also on:
	
	\begin{itemize}
		\item channel transmissivity,
		\item memory decoherence,
		\item Bell-state measurement fidelity,
		\item LOCC success probability,
		\item entanglement purification overhead,
		\item swapping depth.
	\end{itemize}
	
	Consequently, operational Bell connectivity becomes a dynamical physical observable rather than a purely combinatorial property of the graph.
	
	\section{PTM Correlation Transport}
	
	To describe the transport of Bell correlations through realistic swapping chains we employ the Pauli transfer matrix (PTM) representation \cite{Watrous2018}.
	
	Let:
	\[
	\EE:\mathcal B(\mathcal H)\rightarrow\mathcal B(\mathcal H)
	\]
	denote a completely positive map acting on density operators.
	
	In the PTM representation:
	\[
	\mathrm{vec}(\rho')
	=
	T^{(\EE)}
	\mathrm{vec}(\rho),
	\]
	where:
	\[
	T^{(\EE)}
	\]
	is the Pauli transfer matrix associated with the channel.
	
	\subsection{Multi-Hop Entanglement Propagation}
	
	Consider a multi-hop entanglement-distribution process connecting two distant nodes through a sequence of entanglement-swapping operations along a network path of length $L$. Each elementary step of the propagation process involves a combination of Bell-state measurements, local unitary corrections, noisy quantum channels, and measurement-conditioned local operations and classical communication (LOCC). As entanglement propagates through the network, these operations collectively modify the transported quantum correlations and progressively reduce their operational visibility.
	
	To describe this transport process, we employ the Pauli transfer matrix (PTM) representation of quantum channels. Let
	\[
	T^{(\Phi_i)}
	\]
	denote the PTM associated with the $i$-th elementary propagation step. The effective transport channel associated with a complete entanglement-swapping path
	\[
	P=(\Phi_1,\Phi_2,\ldots,\Phi_L)
	\]
	is then obtained through the sequential composition
	\begin{equation}
		T^{(\mathcal N_P)}
		=
		T^{(\Phi_L)}
		\cdots
		T^{(\Phi_1)}.
		\label{eq:ptmchain}
	\end{equation}
	
	Under realistic physical conditions, including optical loss, imperfect Bell measurements, memory decoherence, and probabilistic heralding, the spectral norm of each elementary transport channel remains strictly contractive,
	\[
	\|T^{(\Phi_i)}\|_2\le\lambda_{\max}<1.
	\]
	Consequently, the norm of the effective multi-hop transport channel decays exponentially with propagation depth,
	\begin{equation}
		\|T^{(\mathcal N_P)}\|_2
		\le
		\lambda_{\max}^{L}.
		\label{eq:ptmdecay}
	\end{equation}
	
	Equation~(\ref{eq:ptmdecay}) expresses the central physical mechanism underlying the present work: Bell correlations are progressively attenuated as entanglement propagates through successive swapping operations. This behaviour is closely analogous to finite-correlation and screening phenomena in many-body systems, where long-range correlations decay exponentially beyond a characteristic transport scale \cite{Hastings2006}.
	
	To quantify the operational strength of the transported Bell correlations, we introduce the Bell-correlation visibility
	\begin{equation}
		C_{\mathrm{Bell}}(L)
		=
		C_0\,\lambda_{\max}^{L},
		\label{eq:bellcorr}
	\end{equation}
	where $C_0$ denotes the initial Bell-correlation visibility of the elementary links. 
	
	For realistic metropolitan fibre links, the distributed states are well approximated by Werner-like Bell-diagonal states of the form
	\begin{equation}
		\rho_w
		=
		w\ket{\Phi^+}\bra{\Phi^+}
		+
		(1-w)\frac{\mathbb I}{4},
	\end{equation}
	where $w$ characterizes the Bell-state visibility. For this class of states, violation of the CHSH Bell inequality occurs when \cite{Horodecki1995}
	\[
	w>\frac1{\sqrt2}.
	\]
	
	Operational Bell correlations therefore remain usable only up to a finite maximum swapping depth. Combining the exponential PTM decay with the CHSH threshold condition yields the maximum operational propagation depth
	\begin{equation}
		L_{\max}
		=
		\left\lfloor
		\frac{
			\log(1/\sqrt2 C_0^{-1})
		}{
			\log(\lambda_{\max})
		}
		\right\rfloor.
		\label{eq:lmax}
	\end{equation}
	
	Because the effective transport channels remain contractive,
	\[
	\lambda_{\max}<1,
	\]
	the operational correlation length remains finite even in globally connected networks,
	\[
	L_{\max}=\OO(1).
	\]
	
	This finite operational correlation length plays a central role throughout the remainder of the paper, since it dynamically constrains the subset of Bell pairs that remain physically usable for E91 authentication in realistic quantum communication infrastructures.
	
	This finite-correlation mechanism plays a role analogous to screening phenomena in condensed-matter systems: correlations may propagate globally through the network topology, yet operational Bell nonlocality becomes spatially localized \cite{Hastings2006}.
	
	\section{Emergent Operational Entanglement Graphs}
	
	For a node:
	\[
	A\in V,
	\]
	define the operational Bell neighbourhood:
	\begin{equation}
		\mathcal S_A
		=
		\{B\in V:d(A,B)\le L_{\max}\}.
	\end{equation}
	
	In sparse small-world graphs, the number of nodes accessible within distance $L$ scales asymptotically as \cite{Watts1998,Cuquet2009}:
	\[
	|\mathcal B_L(A)|
	=
	\OO((k-1)^L).
	\]
	
	Since:
	\[
	k=\OO(1),
	\qquad
	L_{\max}=\OO(1),
	\]
	we obtain:
	\[
	|\mathcal S_A|=\Theta(1).
	\]
	
	Therefore, every node possesses only a finite number of operational Bell partners independent of total network size.
	
	This leads directly to the following result.
	
	\begin{theorem}[Linear scaling of operational Bell pairs]
		The total number of CHSH-usable Bell pairs in the network scales linearly with network size:
		\[
		|\mathcal S(N)|=\Theta(N).
		\]
	\end{theorem}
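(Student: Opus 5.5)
The plan is a double-counting argument over the operational Bell neighbourhoods $\mathcal S_A$. I take $\mathcal S(N)$ to be the set of unordered pairs $\{A,B\}$, $A\neq B$, whose link is operational in the sense of the definition of an operational Bell link. The first step is to show that every such pair satisfies $d(A,B)\le L_{\max}$, and hence that $B\in\mathcal S_A$. For this I would use the PTM chain \eqref{eq:ptmchain} along any swapping path between $A$ and $B$. Any such path has length $L\ge d(A,B)$, so by \eqref{eq:ptmdecay} and \eqref{eq:bellcorr} the delivered visibility is at most $C_0\lambda_{\max}^{d(A,B)}$. The CHSH threshold $w>1/\sqrt2$ therefore forces $d(A,B)\le L_{\max}$, with $L_{\max}$ given by \eqref{eq:lmax}.

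For the upper bound, I would write
\[
|\mathcal S(N)| \le \tfrac12\sum_{A\in V}\bigl(|\mathcal S_A|-1\bigr),
\]
where the factor $\tfrac12$ corrects for counting each pair from both endpoints. I would then bound each $|\mathcal S_A|$ by the ball size $|\mathcal B_{L_{\max}}(A)|$. This is the step I expect to be the main obstacle. The estimate $\OO((k-1)^L)$ quoted before the theorem is an average or typical statement, but sparse small-world graphs may contain hubs whose neighbourhoods grow with $N$. So $\langle k\rangle=\OO(1)$ alone does not bound $\max_A|\mathcal S_A|$.

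I would try two routes, in order.
\begin{enumerate}
\item First, I would make explicit a bounded maximum degree assumption, $k_{\max}=\OO(1)$, which is physically natural for fibre nodes with finitely many ports. Then $|\mathcal B_L(A)|\le 1+k_{\max}\sum_{j<L}(k_{\max}-1)^j$ holds uniformly, and the upper bound $\OO(N)$ follows immediately.
\item Failing that, I would bound the sum $\sum_A|\mathcal B_{L_{\max}}(A)|$ directly. It counts pairs joined by walks of length at most $L_{\max}$, so it is at most $\sum_{\ell\le L_{\max}}\mathbf 1^{T}\mathcal A^{\ell}\mathbf 1$, where $\mathcal A$ is the adjacency matrix. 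This is $\OO(N)$ provided the degree moments up to order $L_{\max}$ are bounded, and I would state that as the sparse-mixing hypothesis.
\end{enumerate}

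For the lower bound I need $\Omega(N)$ operational pairs. Here I would use the fact that every physical edge is a one-hop link of visibility $C_0$. Assuming $C_0>1/\sqrt2$, which is implicit in $L_{\max}\ge1$ and otherwise no link at all would be usable, every edge of $E$ lies in $\mathcal S(N)$. Since $G$ is connected (it has finite average distance), $|E|\ge N-1$, which gives $|\mathcal S(N)|\ge N-1$.

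Combining the two bounds gives $|\mathcal S(N)|=\Theta(N)$. The constants depend only on $C_0$, $\lambda_{\max}$ and the degree bound, and not on $N$.
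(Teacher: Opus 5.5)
Your upper bound is the paper's argument. The paper's entire proof is the per-node bound $|\mathcal S_A|\le C(k,L_{\max})$ summed over the $N$ nodes. What you add goes beyond bookkeeping, in three places.

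\textbf{Hubs.} Your objection is correct and exposes a real hole in the paper's argument. The constant $C(k,L_{\max})$ comes from $|\mathcal B_L(A)|=\mathcal{O}((k-1)^L)$, which treats $k$ as a maximum degree. The paper, however, only assumes $\langle k\rangle=\mathcal{O}(1)$ and $\langle d\rangle=\mathcal{O}(\log N)$. A star on $N$ nodes meets both hypotheses, yet for $L_{\max}\ge 2$ all $\binom{N}{2}$ pairs lie within distance $2$. So your assumption $k_{\max}=\mathcal{O}(1)$, or bounded degree moments up to order $L_{\max}$, is genuinely required rather than cosmetic. For route 2 you should state the inequality you need, $\mathbf 1^{T}\mathcal A^{\ell}\mathbf 1\le\sum_v d_v^{\ell}$. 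It is true: write the walk count as $2|E|\,\mathbb E\bigl[\prod_{i=1}^{\ell-1}d_{X_i}\bigr]$ for a stationary simple random walk, then apply AM--GM.

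\textbf{Lower bound.} The paper derives only $|\mathcal S(N)|\le N\,C(k,L_{\max})$ and then writes $\Theta(N)$. Your edge-counting argument supplies the missing $\Omega(N)$ half.

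\textbf{Operational pairs lie in the ball.} Your first step, operational $\Rightarrow d(A,B)\le L_{\max}$, justifies what the paper simply builds into the definition of $\mathcal S_A$.

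One small inconsistency in your lower bound needs fixing. In your first step, a path of $d$ edges delivers visibility at most $C_0\lambda_{\max}^{d}$, which matches the paper's identification of swapping depth with graph distance. Under that convention a single physical edge carries $C_0\lambda_{\max}$, not $C_0$. So the hypothesis you need is that nearest neighbours are usable, $C_0\lambda_{\max}>1/\sqrt2$, which is what $L_{\max}\ge 1$ is meant to express. The condition $C_0>1/\sqrt2$ alone does not make the edges usable. With that hypothesis stated, every edge belongs to $\mathcal S(N)$, and $|\mathcal S(N)|\ge |E|\ge N-1$ follows from connectivity as you say.
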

	
	\begin{proof}
		
		Each node possesses at most:
		\[
		C(k,L_{\max})
		\]
		operational Bell partners.
		
		Summing over all nodes:
		\[
		|\mathcal S(N)|
		\le
		N\,C(k,L_{\max})
		=
		\Theta(N).
		\]
		
	\end{proof}
	
	The operational entanglement graph therefore remains sparse even when the underlying physical network exhibits logarithmic global connectivity.
	
	This emergent sparsification constitutes the key mechanism responsible for sub-quadratic authentication complexity \cite{Acin2007,Pant2019}.
	
	\section{Authentication Complexity}
	
	Authentication requires discovering operational Bell partners within physically accessible neighbourhoods.
	
	Unlike classical fully connected communication graphs, realistic E91 quantum networks cannot operationally exploit arbitrary node pairs because Bell correlations become exponentially suppressed beyond finite correlation depth \cite{Pirandola2017,Azuma2023}.
	
	\subsection{Sparse-Mixing Authentication Model}
	
Our analysis assumes a physically realistic sparse-connectivity regime in which operational Bell links remain locally distributed throughout the network. In particular, we consider metropolitan quantum infrastructures where the number of operationally usable Bell partners associated with each node remains finite, while neighbourhood exploration retains logarithmic small-world scaling. We additionally assume that local Bell-verification attempts are approximately statistically independent over the relevant operational neighbourhood scale.

Under these conditions, each node explores a logarithmic neighbourhood of size
\[
n=\Theta(\log N),
\]
corresponding to the characteristic small-world exploration depth of the network. Because the number of operational Bell partners remains bounded independently of total network size,
\[
|\mathcal S_A|=\Theta(1),
\]
the probability of successfully identifying a CHSH-usable Bell partner remains asymptotically finite,
\[
P_{\mathrm{succ}}=\Theta(1).
\]

This regime captures the physically relevant situation in which operational Bell connectivity remains sparse despite the existence of global network reachability through small-world transport paths.
	
	\subsection{Sub-Quadratic Authentication Scaling}
	
	\begin{theorem}[Authentication complexity]
		Under sparse-mixing assumptions, authentication complexity in realistic E91 small-world quantum networks scales as
		\[
		\Theta(N\log N).
		\]
	\end{theorem}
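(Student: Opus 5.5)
The plan is to express the total authentication cost as a sum of per-node costs and bound each node's cost above and below by $\Theta(\log N)$, using the sparse-mixing model just introduced. I take the authentication complexity to mean the total number of Bell-verification attempts, that is, classical rounds together with consumed pre-shared authentication material, performed by all nodes until each has certified its CHSH-usable Bell partners. Write this as $\mathcal A(N)=\sum_{A\in V}\mathcal A_A$. It then suffices to show $\mathcal A_A=\Theta(\log N)$ uniformly in $A$, since summing over the $N$ nodes gives $\Theta(N\log N)$.

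\textbf{Upper bound.} Under the sparse-mixing assumptions, node $A$ explores a neighbourhood of size $n=\Theta(\log N)$ and performs at most a constant number of verification attempts per explored candidate. By the finite operational correlation length, $L_{\max}=\OO(1)$ from Eq.~(\ref{eq:lmax}), each attempt concerns a path of bounded swapping depth, so it costs $\OO(1)$ resources. Hence $\mathcal A_A=\OO(\log N)$. Independently, Theorem~1 bounds the number of Bell pairs that must actually be certified by $|\mathcal S(N)|=\Theta(N)$. The certification stage therefore contributes only $\OO(N)$, which is dominated by the $\OO(N\log N)$ exploration cost.

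\textbf{Lower bound.} Here I will use $P_{\mathrm{succ}}=\Theta(1)$ together with the approximate statistical independence of local verification attempts. Let $n$ be the number of explored candidates. Because $|\mathcal S_A|=\Theta(1)$ while $n=\Theta(\log N)$, the fraction of explored nodes that are operational partners is $\Theta(1/\log N)$. A node therefore cannot locate its partners without inspecting a constant fraction of the explored neighbourhood, in the spirit of a search or coupon-collector argument, which gives $\mathcal A_A=\Omega(\log N)$ for a positive fraction of nodes. Summing over nodes yields $\Omega(N\log N)$. Combining this with the upper bound gives $\Theta(N\log N)$, which is in particular $o(N^2)$, compared with the naive pair-counting cost $\binom N2$.

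The main obstacle is the lower bound. It is not implied by the physics of Secs.~III--IV; it relies on the modelling assumption that each node must actually explore a $\Theta(\log N)$ neighbourhood. If partners could be located directly, for example via routing tables, the cost would collapse to $\Theta(N)$ by Theorem~1 alone. I would therefore state explicitly that the exploration depth $n=\Theta(\log N)$ is a lower bound on search effort within the sparse-mixing model, justified by the $\langle d\rangle=\OO(\log N)$ small-world distance. The lower bound then holds relative to that model, and the proof records this dependence rather than deriving the $\log N$ factor from first principles.
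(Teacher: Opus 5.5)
Your proposal follows the paper's argument. The paper's proof only asserts that, under the sparse-mixing model, each node performs $\Theta(\log N)$ local Bell-verification attempts, and then sums over the $N$ nodes; this is your decomposition $\mathcal{A}(N)=\sum_{A}\mathcal{A}_A$ with $\mathcal{A}_A=\Theta(\log N)$. Your separate upper and lower bounds add detail the paper omits, and your closing caveat is accurate: the paper also takes the per-node $\Theta(\log N)$ count as a modelling input, and since $\mathcal{S}_A$ is a ball of radius $L_{\max}$, a breadth-first search would find it within $\mathcal{O}(1)$ probes, so your search heuristic alone would not force $\Omega(\log N)$.
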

	
	\begin{proof}
		
		Each node performs:
		\[
		\Theta(\log N)
		\]
		local Bell-verification attempts. Summing over all nodes:
		\[
		N\times\Theta(\log N)
		=
		\Theta(N\log N).
		\]
		\end{proof}
	The logarithmic overhead originates from probabilistic Bell-partner discovery in sparse operational neighbourhoods. Importantly, this scaling law emerges directly from finite-correlation transport physics rather than from externally imposed graph-theoretic assumptions.
	
\section{Distributed Bell-State Verification}

We now discuss the physical verification of distributed Bell correlations.

Consider two remote nodes sharing an unknown bipartite state
\[
\rho_{AB},
\]
together with a trusted Bell reference pair
\[
\ket{\Phi^+}_{A'B'}.
\]

Ancilla-assisted distributed SWAP measurements allow Bell-state certification without directly measuring the primary distributed qubits. The protocol combines local Hadamard operations, controlled-SWAP interactions, and computational-basis measurements performed on ancillary qubits located at the remote network nodes.

Defining the Bell fidelity
\begin{equation}
	F
	=
	\bra{\Phi^+}
	\rho_{AB}
	\ket{\Phi^+},
\end{equation}
the probability of obtaining the ancilla outcome $(0,0)$ is
\begin{equation}
	P_{00}
	=
	\frac{1+F}{2}.
	\label{eq:p00}
\end{equation}

Since separable two-qubit states satisfy
\[
F\le\frac12,
\]
the condition
\[
P_{00}>\frac34
\]
constitutes a sufficient witness of entanglement.

The full derivation of the distributed SWAP-based verification protocol, including the evaluation of the SWAP expectation values and the separability bounds, is provided in the Supplemental Material.
	
	\subsection{Remarks on Clifford Structure}
	
	The verification protocol employs controlled-SWAP operations.
	
	Because the controlled-SWAP (Fredkin) gate is non-Clifford, a full Gottesman--Knill stabilizer simulation is not generally applicable. Nevertheless, Bell-diagonal inputs admit an effective reduced correlator description compatible with stabilizer propagation techniques \cite{Watrous2018}.
	
	The protocol should therefore be interpreted as a physically motivated Bell-certification primitive rather than as a purely stabilizer-based Clifford circuit.
	
	\section{Physical Interpretation}
	
	The results obtained here support a broader physical interpretation of realistic quantum communication infrastructures.
	
	In conventional communication theory, topology is externally designed while communication resources propagate over fixed graphs. In realistic quantum networks, however, operational connectivity itself becomes dynamically generated by the transport properties of entanglement.
	
Loss, decoherence, imperfect Bell-state measurements \cite{Azuma2023}, and finite quantum-memory coherence times induce finite operational correlation lengths analogous to screening lengths in many-body physics \cite{Hastings2006,Verstraete2008}. As a consequence, the effective operational behaviour of the network differs fundamentally from its underlying physical topology. Although the physical fibre infrastructure may remain globally connected and retain small-world communication properties, the subset of Bell correlations that remains operationally usable for E91 authentication becomes intrinsically sparse. Authentication resources therefore scale sub-quadratically with network size, while the effective operational topology itself emerges dynamically from the transport and attenuation of Bell correlations across the network.

From this perspective, scalable quantum communication infrastructures may be interpreted as distributed nonequilibrium many-body systems whose effective geometry is not statically imposed by the physical graph, but instead emerges from the collective propagation of entanglement under realistic transport constraints.
	
	\section{Conclusions}
	
	In this work we have shown that authentication complexity in realistic E91 quantum communication networks is fundamentally governed by the physics of entanglement transport rather than by naive combinatorial pair counting. Using a PTM-based description of multi-hop Bell-correlation propagation, we demonstrated that loss, decoherence, imperfect swapping operations, and LOCC-conditioned transport induce finite operational correlation lengths that dynamically constrain the set of physically usable Bell pairs. As a consequence, realistic quantum metropolitan infrastructures generate sparse emergent operational entanglement graphs even when the underlying fibre topology exhibits global small-world connectivity.
	
	The central contribution of the paper is the demonstration that this physical sparsification mechanism naturally leads to sub-quadratic authentication scaling laws. In sparse metropolitan quantum networks with bounded operational degree, the number of CHSH-usable Bell pairs scales linearly with network size, while authentication overhead acquires only a logarithmic correction associated with probabilistic Bell-partner discovery. Under sparse-mixing assumptions, the resulting authentication complexity scales as
	\[
	\Theta(N\log N).
	\]
	
	More broadly, the present framework suggests a new physical interpretation of scalable quantum communication infrastructures as distributed entanglement-transport systems whose effective operational geometry emerges dynamically from correlation propagation under realistic constraints. This perspective establishes a direct conceptual bridge between quantum-network theory, open-system correlation transport, entanglement percolation, and scalable E91 authentication architectures, providing a physically grounded route toward realistic large-scale quantum-secured communication networks \cite{Acin2007,Cuquet2009,Pant2019}.
	\section*{Acknowledgments}
	
	This work has been supported by the ``Hub Nacional de Excelencia en
	Comunicaciones Cuánticas'' project, funded by the Ministerio para la
	Transformación Digital y de la Función Pública within the framework of the
	Plan de Recuperación, Transformación y Resiliencia and the Mecanismo de
	Recuperación y Resiliencia. Funded by the European Union --
	NextGenerationEU.
	
	\appendix
	
	\section{Distributed SWAP-Based Bell-State Verification}
	
	In this Appendix we derive the distributed ancilla-assisted SWAP verification protocol used in the main text and establish the corresponding separability bound.
	
	\subsection{Protocol Definition}
	
	Consider two remote nodes, Alice and Bob, sharing an unknown two-qubit state
	\[
	\rho_{AB},
	\]
	together with a trusted Bell reference pair
	\[
	\ket{\Phi^+}_{A'B'}
	=
	\frac{1}{\sqrt2}
	\left(
	|00\rangle+|11\rangle
	\right).
	\]
	
	Ancillary qubits
	\[
	a_A,\qquad a_B
	\]
	are initialized in the computational state
	\[
	|0\rangle_{a_A}\otimes |0\rangle_{a_B}.
	\]
	
	The global initial state is therefore
	\begin{equation}
		\rho_0
		=
		|00\rangle\langle00|_{a_A a_B}
		\otimes
		\rho_{AB}
		\otimes
		|\Phi^+\rangle\langle\Phi^+|_{A'B'}.
	\end{equation}
	
	Hadamard gates are first applied to both ancillas, preparing
	\[
	|+\rangle
	=
	\frac{|0\rangle+|1\rangle}{\sqrt2}.
	\]
	
	The ancilla subsystem becomes
	\[
	|+\rangle_{a_A}\otimes |+\rangle_{a_B}.
	\]
	
	Controlled-SWAP operations are then applied locally:
	\begin{align}
		U_A
		&=
		|0\rangle\langle0|_{a_A}\otimes I
		+
		|1\rangle\langle1|_{a_A}\otimes S_A,
		\\
		U_B
		&=
		|0\rangle\langle0|_{a_B}\otimes I
		+
		|1\rangle\langle1|_{a_B}\otimes S_B,
	\end{align}
	where:
	\[
	S_A\equiv S_{AA'},
	\qquad
	S_B\equiv S_{BB'}
	\]
	denote the SWAP operators acting on the corresponding subsystems.
	
	Finally, Hadamard gates are again applied to both ancillas and the ancillas are measured in the computational basis.
	
	\subsection{Joint Ancilla Probabilities}
	
	Define
	\[
	\sigma
	=
	\rho_{AB}
	\otimes
	|\Phi^+\rangle\langle\Phi^+|_{A'B'}.
	\]
	
	After the controlled-SWAP operations and the second Hadamard layer, the joint ancilla measurement probabilities satisfy
	\begin{widetext}
	\begin{equation}
		P_{m_A,m_B}
		=
		\frac14
		\Big[
		1
		+
		(-1)^{m_A}\langle S_A\rangle
		+
		(-1)^{m_B}\langle S_B\rangle
		+
		(-1)^{m_A+m_B}
		\langle S_A S_B\rangle
		\Big],
		\label{eq:jointprob_appendix}
	\end{equation}
	\end{widetext}
	where expectation values are evaluated with respect to $\sigma$ \footnote{
		The joint ancilla probability distribution is obtained by explicitly propagating the ancilla density operator through the two Hadamard layers and the controlled-SWAP operations, followed by a partial trace over the data registers. Writing
		\[
		\sigma=\rho_{AB}\otimes
		|\Phi^+\rangle\langle\Phi^+|_{A'B'},
		\]
		the controlled-SWAP operations imprint the expectation values of the SWAP operators onto the ancilla coherences. After the second Hadamard transformation, the resulting interference terms generate the parity structure appearing in Eq.~(\ref{eq:jointprob_appendix}). The full derivation follows from explicit expansion of the computational-basis amplitudes and repeated application of the identities
		\[
		S_A^2=S_B^2=I
		\]
		together with cyclicity of the trace.
	}.
	
	\subsection{Evaluation of the SWAP Expectation Values}
	
	Define the Bell fidelity
	\begin{equation}
		F
		=
		\bra{\Phi^+}
		\rho_{AB}
		\ket{\Phi^+}.
	\end{equation}
	
	Using the SWAP identity
	\[
	S_{AA'}
	(|i\rangle_A|k\rangle_{A'})
	=
	|k\rangle_A|i\rangle_{A'},
	\]
	one obtains
	\begin{equation}
		\langle S_A\rangle
		=
		\mathrm{Tr}[S_A\sigma]
		=
		F.
	\end{equation}
	
	By symmetry,
	\begin{equation}
		\langle S_B\rangle
		=
		F.
	\end{equation}
	
	Similarly, since the Bell reference state is invariant under subsystem exchange,
	\begin{equation}
		\langle S_A S_B\rangle
		=
		1.
	\end{equation}
	
	Substituting these results into Eq.~(\ref{eq:jointprob_appendix}) yields
	\begin{equation}
		P_{m_A,m_B}
		=
		\frac14
		\Big[
		1
		+
		(-1)^{m_A}F
		+
		(-1)^{m_B}F
		+
		(-1)^{m_A+m_B}
		\Big].
	\end{equation}
	
	Selecting the outcome
	\[
	m_A=m_B=0
	\]
	gives
	\begin{equation}
		P_{00}
		=
		\frac14(1+F+F+1)
		=
		\frac{1+F}{2}.
		\label{eq:p00_appendix}
	\end{equation}
	
	\subsection{Separability Bound}
	
	We now prove that any separable two-qubit state satisfies
	\[
	F\le\frac12.
	\]
	
	Consider first a pure product state
	\[
	|\psi\rangle_A\otimes |\varphi\rangle_B,
	\]
	with
	\[
	|\psi\rangle
	=
	\alpha|0\rangle+\beta|1\rangle,
	\qquad
	|\varphi\rangle
	=
	\gamma|0\rangle+\delta|1\rangle,
	\]
	and normalization conditions
	\[
	|\alpha|^2+|\beta|^2=1,
	\qquad
	|\gamma|^2+|\delta|^2=1.
	\]
	
	The overlap with the Bell state is
	\[
	\langle\Phi^+|\psi\varphi\rangle
	=
	\frac1{\sqrt2}
	(\alpha\gamma+\beta\delta).
	\]
	
	The corresponding Bell fidelity becomes
	\begin{equation}
		F_{\mathrm{prod}}
		=
		\frac12
		|\alpha\gamma+\beta\delta|^2.
	\end{equation}
	
	Applying the Cauchy--Schwarz inequality,
	\[
	|\alpha\gamma+\beta\delta|^2
	\le
	(|\alpha|^2+|\beta|^2)
	(|\gamma|^2+|\delta|^2)
	=
	1,
	\]
	one obtains
	\[
	F_{\mathrm{prod}}
	\le
	\frac12.
	\]
	
	Now consider a general separable mixed state
	\begin{equation}
		\rho_{AB}
		=
		\sum_k
		p_k
		\rho_A^{(k)}
		\otimes
		\rho_B^{(k)},
		\qquad
		p_k\ge0,
		\qquad
		\sum_k p_k=1.
	\end{equation}
	
	Since the Bell fidelity is linear in $\rho_{AB}$,
	\begin{equation}
		F
		=
		\sum_k
		p_k
		\bra{\Phi^+}
		\rho_A^{(k)}
		\otimes
		\rho_B^{(k)}
		\ket{\Phi^+}.
	\end{equation}
	
	Each term individually satisfies
	\[
	F_k\le\frac12,
	\]
	and therefore
	\begin{equation}
		F
		\le
		\sum_k
		p_k
		\frac12
		=
		\frac12.
	\end{equation}
	
	Consequently,
	\[
	F>\frac12
	\]
	constitutes a sufficient witness of bipartite entanglement.
	
	Combining this result with Eq.~(\ref{eq:p00_appendix}) immediately yields the operational certification condition
	\[
	P_{00}>\frac34,
	\]
	used throughout the main text.


\begin{thebibliography}{99}
		
		\bibitem{Kimble2008}
		H.~J.~Kimble,
		``The quantum internet,''
		Nature \textbf{453}, 1023--1030 (2008).
		
		\bibitem{Wehner2018}
		S.~Wehner \emph{et al.},
		``Quantum internet: A vision for the road ahead,''
		Science \textbf{362}, eaam9288 (2018).
		
		\bibitem{Azuma2023}
		K.~Azuma, A.~Pirker, D.~Markham, and W.~D\"ur,
		``Quantum repeaters: From quantum networks to the quantum internet,''
		Reviews of Modern Physics \textbf{95}, 045006 (2023).
		
		\bibitem{martin2024}
		V.~Mart\'in \emph{et al.},
		``MadQCI: a heterogeneous and scalable SDN-QKD network deployed in production facilities,''
		npj Quantum Inf. \textbf{10}, 80 (2024).
		
		\bibitem{Acin2007DIQKD}
		A.~Ac\'in \emph{et al.},
		``Device-Independent Security of Quantum Cryptography against Collective Attacks,''
		Phys. Rev. Lett. \textbf{98}, 230501 (2007).
		
		\bibitem{diamanti2016}
		E.~Diamanti, H.-K.~Lo, B.~Qi, and Z.~Yuan,
		``Practical challenges in quantum key distribution,''
		npj Quantum Inf. \textbf{2}, 16025 (2016).
		
		\bibitem{Watrous2018}
		J.~Watrous,
		\textit{The Theory of Quantum Information}
		(Cambridge University Press, Cambridge, 2018). ISBN: 9781316848142, 
		DOI: https://doi.org/10.1017/9781316848142
		
		
		\bibitem{Watts1998}
		D.~J.~Watts and S.~H.~Strogatz,
		``Collective dynamics of 'small-world' networks,''
		Nature \textbf{393}, 440--442 (1998).
		
		\bibitem{Acin2007}
		A.~Ac\'in, J.~I.~Cirac, and M.~Lewenstein,
		``Entanglement percolation in quantum networks,''
		Nature Physics \textbf{3}, 256--259 (2007).
		
		\bibitem{Cuquet2009}
		M.~Cuquet and J.~Calsamiglia,
		``Entanglement Percolation in Quantum Complex Networks,''
		Phys. Rev. Lett. \textbf{103}, 240503 (2009).
		
		\bibitem{Pant2019}
		M.~Pant \emph{et al.},
		``Routing entanglement in the quantum internet,''
		npj Quantum Information \textbf{5}, 25 (2019).
		
		\bibitem{Hastings2006}
		M.~B.~Hastings and T.~Koma,
		``Spectral gap and exponential decay of correlations,''
		Communications in Mathematical Physics \textbf{265}, 781--804 (2006).
		
		\bibitem{Verstraete2008}
		F.~Verstraete, V.~Murg, and J.~I.~Cirac,
		``Matrix product states, projected entangled pair states, and variational renormalization group methods for quantum spin systems,''
		Adv. Phys. \textbf{57}, 143--224 (2008).
		
		\bibitem{Horodecki1995}
		R.~Horodecki, P.~Horodecki, and M.~Horodecki,
		``Violating Bell inequality by mixed spin-$\frac12$ states: necessary and sufficient condition,''
		Physics Letters A \textbf{200}, 340--344 (1995).
		
		\bibitem{Pirandola2017}
		S.~Pirandola \emph{et al.},
		``Fundamental limits of repeaterless quantum communications,''
		Nature Communications \textbf{8}, 15043 (2017).
		
		\bibitem{Pironio2010}
		S.~Pironio \emph{et al.},
		``Random numbers certified by Bell's theorem,''
		Nature \textbf{464}, 1021--1024 (2010).
		
	\end{thebibliography}
\end{document}